\newtheorem{theorem}{Theorem}
\newtheorem{prop}[theorem]{Proposition}
\newcommand{\Id}{{\rm 1\hspace{-0.9mm}l}}
\newcommand{\C}{\ensuremath{\mathbb{C}}}
\renewcommand{\AA}{\ensuremath{\mathcal{A}}}
\newcommand{\BB}{\ensuremath{\mathcal{B}}}
\newcommand{\proj}[1]{\ensuremath{\ketbra{#1}{#1}}}
\newcommand{\p}{\ensuremath{p_{\mathrm{I}}}}
\newcommand{\pp}{\ensuremath{p_{\mathrm{II}}}}
\newcommand{\defeq}{\mathrel{\vcenter{\baselineskip0.5ex \lineskiplimit0pt
                     \hbox{\scriptsize.}\hbox{\scriptsize.}}}%
                     =}
\begin{document}

\title{Local certification of unitary operations}
\author{Ryszard Kukulski}
\affiliation{Faculty of Physics, Astronomy and Applied Computer Science,\\
ul. Łojasiewicza 11, Jagiellonian University, 30-348 Kraków, Poland}

\author{Mateusz Stępniak}
\affiliation{Quantumz.io Sp. z o.o., Puławska 12/3, 02-566 Warsaw, Poland}

\author{Kamil Hendzel}
\affiliation{Quantumz.io Sp. z o.o., Puławska 12/3, 02-566 Warsaw, Poland}

\author{Łukasz Pawela}
\email{lpawela@iitis.pl}
\affiliation{Institute of Theoretical and Applied Informatics, Polish Academy of Sciences, Ba{\l}tycka 5, 44-100 Gliwice, Poland}

\author{Bartłomiej Gardas}
\affiliation{Institute of Theoretical and Applied Informatics, Polish Academy of Sciences, Ba{\l}tycka 5, 44-100 Gliwice, Poland}

\author{Zbigniew Puchała}
\affiliation{Institute of Theoretical and Applied Informatics, Polish Academy of Sciences, Ba{\l}tycka 5, 44-100 Gliwice, Poland}

\date{\today}

\begin{abstract}
    In this work, we analyze the local certification of unitary quantum
    channels, which is a natural extension of quantum hypothesis testing. A
    particular case of a quantum channel operating on two systems corresponding
    to product states at the input, is considered. The goal is to minimize the
    probability of the type II error, given a specified maximum probability of
    the type I error, considering assistance through entanglement with auxiliary
    systems. Our result indicates connection of the local certification problem
    with a product numerical range of unitary matrices. We show that the optimal
    local strategy does not need usage of auxiliary systems and requires only
    single round of one-way classical communication. Moreover, we compare local
    and global certification strategies and show that typically local strategies
    are optimal, yet in some extremal cases, where global strategies make no
    errors, local ones may fail miserably. Finally, some application for local
    certification of von Neumann measurements are discussed as well.
\end{abstract}

\maketitle

\section{Introduction}

In quantum information theory, a well-known problem is the discrimination of
states and quantum channels, as solved by Helstrom~\cite{helstrom}. This problem
involves distinguishing which state or channel from a given pair we are dealing
with, based on a prepared measurement. It plays a pivotal role in the
comprehension and manipulation of quantum systems. The ensuing step is
certification, a process aimed at confirming whether a given hypothesis
regarding the state, channel, or measurement holds true; this is achieved by
contrasting it with an alternative hypothesis. Certification safeguards the
integrity and reliability of quantum operations, rendering it indispensable for
quantum computing and communications. The mathematical explanation of the 
\emph{modus operandi} of quantum computers
involves the use of quantum operations and channels. This paper focuses on 
\textbf{the
local certification of unitary operations}, a
technique which is crucial for enhancing quantum computing applications, and
developing quantum algorithms and error correction strategies. This
certification is useful for benchmarking quantum devices, thus steering the
progression of quantum algorithm design.

Great deal of work has been done in the domain of local 
certification and distinguishability of quantum states. The research were 
focused on presenting conditions for a finite set of orthogonal
quantum states to be distinguishable by local 
operations~\cite{ghosh2004distinguishability, nathanson2005distinguishing, 
duan2009distinguishability, childs2013framework, zhang2017local}. It has 
been noticed, that local certification strategies not always are as powerful 
as global strategies involving usage of a quantum entanglement. Many 
examples of sets of orthogonal quantum states that cannot be perfectly 
certified were discovered in the literature~\cite{fan2004distinguishability, 
zhang2016local}; also in the domain of mixed quantum states 
\cite{calsamiglia2010local}.

The following up research about local discrimination of unitary channels has 
been built upon the results concerning quantum states. It was observed that 
there exist discrimination problems for which local procedures may be 
optimal as well as there exist problems for which they perform 
poorly~\cite{matthews2010entanglement}. Some conditions concerning optimal 
local discrimination strategies were already proposed 
too~\cite{bae2015discrimination}. Eventually, couple of works focused on 
many copies scenario, were it was shown that any two different unitary 
operations acting on an arbitrary multipartite quantum
system can be perfectly distinguishable by local operations and classical 
communication when a finite number of runs~\cite{duan2007entanglement, 
duan2008local, cao2016minimal}. However, in the literature, the problem of 
certification of unitary channels has not been yet considered. Taking up 
this challenge, in this work we explore a scenario where two parties, having 
access to a shared quantum unitary channel, engage in its certification. We 
compare local certification strategies with global ones and find the optimal 
and resource efficient local certification strategies.

Certification is closely related to statistical hypothesis testing, which is a
fundamental concept in statistical decision
theory~\cite{understanding_hypotesis_testing}. We consider a system with two
hypotheses: the null hypothesis ($H_0$) and the alternative hypothesis ($H_1$).
The null hypothesis intuitively corresponds to a promise about the system given
by its creator. By performing a test, we decide which hypothesis to accept as
true. A type I error occurs if we reject the null hypothesis when it is true.
The probability of this error occurring is called the level of significance. On
the other hand, a type II error occurs when we accept the null hypothesis, even
though it is false. We want to minimize the type II error given an assumed level
of significance. This approach is commonly referred to as certification. It
turns out that the concept of the numerical range of a matrix is an useful tool
in such issues. The numerical range provides
insights into the spectral and structural properties of matrices, making it important in quantum 
mechanics~\cite{product_usefull}.

\section{Preliminaries}
In this section, we elucidate the foundational principles underlying the
certification processes under consideration. 

\subsection{Notation}
In this work, we will encounter the notation of quantum states, quantum measurements and quantum 
channels. In order to set notation, we mention them here briefly (see \cite{watrous} for 
more detailed explanation).

We say that operator $\rho$ represents a quantum state defined on a system of dimension $d$ if 
this operator is positive semidefinite ($\rho \ge 0$) with unit trace ($\tr \rho = 1$). An 
linear map $\Psi$ will be called quantum channel if it is completely positive and trace 
preserving map (CPTP). We will consider a special family of quantum channels known as unitary 
channels. For an unitary matrix $U$ we define unitary quantum channel $\Psi_U$ by
\begin{equation}
    \Psi_U(\rho)= U \rho U^\dagger,
\end{equation}
where $\rho$ is an input quantum state $\rho$.

In a finite-dimensional case, quantum measurement, also called as positive operator-valued 
measure (POVM), is be represented by a set of positive semidefinite operators $\Omega = 
\{\Omega_i\}_i$ (also known as effects), such that $\sum_i \Omega_i = \Id$. 
According to the Born rule, for a given 
state $\rho$ the probability of obtaining the measurement outcome $i$ is given by $p_i = \Tr 
\rho \Omega_i$. The special subclass of quantum measurements consists of Von 
Neumann measurements. They fulfill the additional requirement that all 
effects $\Omega_i$ are rank-one projectors. Hence, for a von Neumann 
measurement acting on a state of dimension $d$ there are exactly $d$ effects 
$\Omega_i$ which are pairwise
orthogonal. This simple observation allows us to parameterize a 
$d$-dimensional
von Neumann measurement using a unitary matrix $U$, $P_U = 
\{U\proj{i}U^\dagger\}_{i=1}^d$. As a shorthand notation, we will write 
$\ket{u_i} \defeq U\ket{i}$. We can
associate a measure-and-prepare channel with a von Neumann measurement
\begin{equation}
P_U(\rho) = \sum_{i=1}^{d}  \bra{u_i} \rho \ket{u_i} \ketbra{i}{i}.
\end{equation}

Finally, in this work we will consider the family of bipartite quantum channels and measurements 
that can be realized by using only local operations and classical communication (LOCC) between 
two involved parties, say Alice and Bob. Let us assume that Alice has access to a system 
$\AA$ and Bob to a system $\BB$. Then, we say that bipartite quantum channel 
$\Psi$ (or measurement $\Omega$) is LOCC with respect to the partition 
$\AA:\BB$, if it can be realized by using local quantum 
operations on $\AA$ and $\BB$ separately, and by sharing classical information 
between $\AA$ and $\BB$. Such operations will be used notoriously in this work 
to prepare input quantum states and POVMs without creating quantum entanglement between involved 
parties. 

\subsection{Numerical ranges of a matrix}
In the context of certifying unitary channels, the numerical range and
the product numerical range play a pivotal role. The \emph{numerical range} of a
square matrix $X$ of size $d$, is defined as a subset of the complex plane:
\begin{equation}
    W(X)= \{ \bra{\psi} X \ket{\psi} : \braket{\psi}{\psi} = 1, \ket{\psi} \in \C^d \}.
\end{equation}
The set $W(X)$ is compact and convex; an in depth discussion of its properties and application 
can be found 
in~\cite{numerical_range, numerical_shadow}.

The \emph{product numerical range} of a square matrix $X$ of size $d_1 \cdot 
d_2$ with the 
partition $d_1: d_2$ is defined as:
\begin{equation}
\begin{split}
    W^{\otimes}_{d_1:d_2}(X) &= \{ (\bra{\psi_1} \otimes \bra{\psi_2}) X (\ket{\psi_1} \otimes 
    \ket{\psi_2}) : \\
    &\phantom{{}={}} \braket{\psi_1}{\psi_1} = 1, \braket{\psi_2}{\psi_2} = 1, \\
     &\phantom{{}={}} \ket{\psi_1} \in \C^{d_1} ,  \ket{\psi_2} \in \C^{d_2} \}.
\end{split}
\end{equation}
The core properties of this object are described in~\cite{product}.

\subsection{Operational scenario}
\label{operational_scenario} 

\begin{figure}[htbp]
\includegraphics{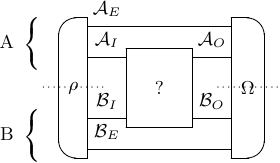}
\caption{ Schematic representation of the operational scenario for
unitary channel certification. Alice ($\mathrm{A}$) and Bob ($\mathrm{B}$) 
can prepare the initial state $\rho$ and the final measurement $\Omega$ by 
using LOCC operations to certificate if $? = \Psi_U$ or $? = \Psi_V$.  \label{schematic}}
\end{figure}

In the certification of a unitary channel, two parties, Alice and Bob, have an access to a 
quantum unitary channel. They have knowledge that the given channel is one of two possible 
unitary channels: $\Psi_U$, which will be identified with $H_0$ hypothesis or $\Psi_V$, which 
will be identified with $H_1$ hypothesis. Alice and Bob do not know which one is provided to 
them. Their goal is to find the best strategy to certificate that the given unknown channel is 
$\Psi_U$ and detect whenever the unknown operation is $\Psi_V$. More precisely, given the level 
of significance $\delta \ge 0$ as a parameter, they want to minimize the probability of making 
type II error provided the probability of making type I error is not greater than $\delta$.

The unknown operation has two inputs on the spaces $\AA_I$ and $\BB_I$, and two 
outputs on the spaces $\AA_O$ and $\BB_O$ (see Fig.~\ref{schematic}). Alice has 
an access to spaces 
$\AA_I$ and $\AA_O$ and auxiliary space 
$\AA_E$, while Bob to $\BB_I$ and $\BB_O$ and auxiliary system  
$\BB_E$, respectively. To produce an input state $\rho$ Alice and Bob are limited to use 
LOCC operations with respect to the partition $\AA_I \otimes \AA_E : \BB_I \otimes \BB_E$.
We will write in short that $\rho \in \mathtt{LOCC}$. Similarly, to produce a POVM $\Omega = 
\{\Omega_0, \Omega_1\}$ Alice and Bob are limited to use 
LOCC operations with respect to the partition $\AA_O \otimes \AA_E : \BB_O \otimes \BB_E$. We 
will use the notation $\Omega \in \mathtt{LOCC}$. Based 
on the results (classical labels) of their measurements, they decide, which unitary channel they 
are dealing with: label $0$ associated with the effect $\Omega_0$ indicates $H_0$, while label 
$1$ associated with the effect $\Omega_1$ indicates $H_1$.

\section{Main results}
\subsection{Unitary channel certification}\label{channelcertification}

In this section we will state the solution to the problem introduced in the 
Section~\ref{operational_scenario}. Let $\AA_I = \AA_O = \C^{d_1}$, $\BB_I = \BB_O = \C^{d_2}$. 
We are given two unitary matrices $U$ and $V$ of size $d_1 \cdot 
d_2$ and consider two hypotheses: 
\begin{itemize}
    \item \(H_0\): The operation is \(\Psi_U\).
    \item \(H_1\): The operation is \(\Psi_V\).
\end{itemize}

As mentioned previously (see also Fig.~\ref{schematic}), Alice and Bob can prepare the input 
state $\rho$ and the measurement $\Omega = \{\Omega_0, \Omega_1\}$ by using LOCC operations with 
respect to the partitions $\AA_I \otimes \AA_E : \BB_I \otimes \BB_E$ and $\AA_O \otimes \AA_E : 
\BB_O \otimes \BB_E$, respectively. The dimension of spaces $\AA_E$ and $\BB_E$ are arbitrary.
Alice and Bob accept accept the null hypothesis if the measurement result is $\Omega_0$, 
otherwise, they reject it. Thus, we arrive at the following formulas for the probabilities of
type I and type II errors:

\begin{equation}
    \begin{split}
    \p(\Omega, \rho) &= \tr\left(\Omega_1 (\Psi_U \otimes \Id_{\AA_E \otimes 
    \BB_E})(\rho)\right), \\
    \pp(\Omega, \rho)&=  \tr\left(\Omega_0 (\Psi_V \otimes \Id_{\AA_E \otimes 
    \BB_E})(\rho)\right).
\end{split}
\end{equation}
Certification requires minimizing $\pp$ under the condition $\p \leq \delta$, where $\delta$ is 
the desired significance level. It leads to the following optimization 
problem:
\begin{equation}\label{funkcja_celu}
    \pp(U, V) \defeq \min \{
    \pp(\Omega, \rho): \p(\Omega, \rho) \leq \delta, \rho, \Omega \in \mathtt{LOCC} \}.
\end{equation}

The remainder of this section is devoted to proving the
following theorem. This result shows the relation between two-party
certification and the product numerical range.

\begin{theorem}\label{thm:unitary_channels}
Consider the problem of two-point certification of unitary channels with
hypotheses
\begin{itemize}
    \item \(H_0\): The operation is \(\Psi_U\).
    \item \(H_1\): The operation is \(\Psi_V\).
\end{itemize}
defined as in Section~\ref{channelcertification} for  unitary 
matrices $U$ and $V$ of size $d_1 \cdot d_2$, and statistical significance 
$\delta \in [0,1]$. Let 
$z$ be the euclidean 
distance between $0$ and $W^{\otimes}_{d_1:d_2}(V^\dagger U)$, that is 
\begin{equation}\label{eq:z}
z \defeq \min\{ |x|: x \in  W^{\otimes}_{d_1:d_2}(V^\dagger U) \}.
\end{equation} Then, for the most 
powerful 
test utilizing LOCC operations, the
probability of the type II error yields
\begin{equation}\label{prob-local}
\begin{split}
\pp(U, V)= \begin{cases}
0, & z \le \sqrt{\delta}, \\
\left(z\sqrt{1-\delta} -  \sqrt{1 - 
z^2}\sqrt{\delta}\right)^2, & z > \sqrt{\delta}.
\end{cases}
\end{split}
\end{equation} 
\end{theorem}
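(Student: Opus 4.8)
The plan is to reduce the LOCC certification problem to a single-copy state certification problem, and then invoke the known Helstrom-type formula for certifying two pure states with a fixed significance level. First I would observe that, since $U$ and $V$ are unitary, the only thing that matters at the output is the overlap between $(\Psi_U \otimes \Id)(\rho)$ and $(\Psi_V \otimes \Id)(\rho)$; by purifying $\rho$ (or rather noting that LOCC input states can be taken pure, by convexity of the error probabilities and standard extreme-point arguments — an LOCC product pure state suffices because any separable state is a mixture of product pure states and the objective is affine in $\rho$), the problem becomes: given input $\ket{\psi_1}_{\AA_I \AA_E} \otimes \ket{\psi_2}_{\BB_I \BB_E}$, distinguish the two output pure states $\ket{\alpha} = (U \otimes \Id)\ket{\psi}$ and $\ket{\beta} = (V \otimes \Id)\ket{\psi}$ by an LOCC measurement, minimizing type II error subject to type I error $\le \delta$.

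Next I would compute the relevant inner product $\langle \alpha | \beta\rangle = \bra{\psi} (U^\dagger V \otimes \Id) \ket{\psi}$ and explain why its modulus can be pushed down exactly to $z$, the distance of $W^{\otimes}_{d_1:d_2}(V^\dagger U)$ from the origin, and no lower. The key point is that adding the auxiliary systems $\AA_E, \BB_E$ does \emph{not} help: by the structure of $(U^\dagger V \otimes \Id_{\AA_E \otimes \BB_E})$ acting on a product state across the partition $\AA_I \AA_E : \BB_I \BB_E$, the achievable values of $\bra{\psi}(U^\dagger V \otimes \Id)\ket{\psi}$ coincide with the convex hull of $W^{\otimes}_{d_1:d_2}(U^\dagger V)$ (tracing out the ancillas reduces $\ket{\psi_1}$ to a mixed state on $\AA_I$, giving convex combinations of numerical-range points — but the minimum modulus over a set and over its convex hull can differ, so here one must be careful and argue that the minimizer is attained on $W^{\otimes}$ itself, i.e. at a product pure state with no ancilla). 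Concretely I expect: $\min_{\rho \in \mathtt{LOCC}} |\langle \alpha|\beta\rangle|$ equals $z$, with the optimum achieved by a pure product state $\ket{\psi_1}\otimes\ket{\psi_2}$ and trivial ancillas; this is where the product numerical range enters and this identification is the conceptual heart of the theorem.

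Then I would handle the measurement side. Given the two output pure states with overlap of modulus $c := |\langle\alpha|\beta\rangle|$, I claim that (i) the globally optimal measurement for this fixed pair of pure states can be taken to be LOCC, because two pure states span at most a two-dimensional subspace and the optimal projective measurement lives in the span of $\ket\alpha, \ket\beta$, which — after choosing the input product state optimally — can be implemented by one party measuring and communicating classically to the other (this is where "single round of one-way classical communication" and "no auxiliary systems" in the abstract come from; it needs a short argument exhibiting the explicit LOCC protocol, probably Alice measures in a suitable basis on $\AA_I \AA_O$ and sends one bit, or the two-outcome POVM factorizes appropriately). And (ii) for two pure states with overlap modulus $c$, the minimal type II error at significance $\delta$ is the standard asymmetric-discrimination formula
\begin{equation*}
\pp = \begin{cases} 0, & c \le \sqrt{\delta},\\ \left(c\sqrt{1-\delta} - \sqrt{1-c^2}\sqrt{\delta}\right)^2, & c > \sqrt\delta,\end{cases}
\end{equation*}
which follows from a two-dimensional SDP / direct geometric optimization over two-outcome POVMs on $\mathrm{span}\{\ket\alpha,\ket\beta\}$. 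Since $\pp$ as a function of $c$ is monotonically increasing on $[\sqrt\delta, 1]$, minimizing over strategies means taking $c$ as small as possible, namely $c = z$, giving exactly \eqref{prob-local}.

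The main obstacle I anticipate is rigorously proving that auxiliary entanglement with ancillas provides no advantage — i.e. that the optimum over all LOCC input states (with ancillas of arbitrary dimension) of the achievable overlap modulus equals $z$ rather than something smaller coming from the convex hull, and simultaneously that the optimal discriminating measurement for the resulting pure-state pair is itself LOCC. One must rule out a scenario where a cleverly entangled-with-ancilla (but still $\AA:\BB$-separable) input lets the output states be \emph{more} distinguishable while keeping the measurement LOCC; the resolution should be that distinguishability of the outputs is governed by $|\langle\alpha|\beta\rangle|$, which for separable inputs cannot go below $z$, combined with the fact that at a product-pure optimum the natural Helstrom measurement happens to respect the bipartition. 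Packaging these two facts cleanly — likely via a joint convexity argument for the lower bound and an explicit protocol for the upper bound — is the crux; the rest is the standard pure-state certification computation.
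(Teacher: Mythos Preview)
Your proposal is correct and follows essentially the same route as the paper: restrict to product pure inputs, invoke the known pure-state certification formula for the resulting overlap, show the optimal two-outcome measurement is LOCC, and then minimize the overlap modulus to obtain $z$. The two obstacles you explicitly flag are precisely where the paper appeals to external results --- the Walgate--Short--Hardy--Vedral theorem that any two orthogonal (multipartite) pure states are perfectly LOCC-distinguishable (which handles the measurement side in both the $|\langle h_0|h_1\rangle|\le\sqrt\delta$ and $>\sqrt\delta$ cases by applying it to suitable orthogonal pairs $\ket{\omega},\ket{h_1}$ or $\ket{\omega},\ket{\omega^\perp}$), and Lemma~2 of Duan--Feng--Ying showing $W^{\otimes}_{d_1e_1:d_2e_2}(V^\dagger U\otimes\Id_{\AA_E\otimes\BB_E})=W^{\otimes}_{d_1:d_2}(V^\dagger U)$, which disposes of your convex-hull worry about ancillas.
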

\begin{proof}
By utilizing local quantum operations and classical communication, Alice and 
Bob can prepare any separable quantum state $\rho = \sum_j p_j \proj{a_j} 
\otimes \proj{b_j},$ where $\proj{a_j}$ are defined on $\AA_I \otimes \AA_E$ 
and $\proj{b_j}$ on $\BB_I \otimes \BB_E$. 

Let us assume that Alice and Bob take a product 
state, that is $\rho = \proj{a,b} = \proj{a} \otimes \proj{b}$. Then, 
depending which 
hypothesis is true, we obtain the one of the following pure states
\begin{equation}
\begin{split}
H_0: &\quad \proj{h_0} \defeq (\Psi_U \otimes \Id_{\AA_E \otimes 
    \BB_E})(\rho),\\
H_1: & \quad \proj{h_1} \defeq (\Psi_V \otimes \Id_{\AA_E \otimes 
    \BB_E})(\rho).
\end{split}
\end{equation}
We consider two cases. If $|\braket{h_0}{h_1}| \le \sqrt{\delta}$, 
then according to \cite[Theorem 1 \& Corollary 1]{optimal_certification} the 
best pure states certification strategy $\widetilde{\Omega}$ with 
significance level $\delta$ is of the form: $\widetilde{\Omega} = 
\{\widetilde{\Omega_0}, \widetilde{\Omega_1}\}$, where $\widetilde{\Omega_0} 
= \proj{\omega}$ for $\ket{\omega} = 
\frac{\ket{\widetilde{\omega}}}{\| \widetilde{\omega} \|}$ and 
$\ket{\widetilde{\omega}} = \ket{h_0} - \braket{h_1}{h_0} \ket{h_1}$. The 
operator $\proj{\omega}$ satisfies $|\braket{h_1}{\omega}|^2 = 0$ and 
$|\braket{h_0}{\omega}|^2 \ge 1 - \delta$. Therefore, the 
states $\proj{h_1}$ and $\proj{\omega}$ are orthogonal and we can find a 
LOCC measurement $\Omega = \{\Omega_0, \Omega_1\}$ \cite{walgate2000local}, 
such that $\tr(\Omega_0 
\proj{\omega}) = 1$ and $\tr(\Omega_1 \proj{h_1}) = 1$. Alice and Bob choose 
$\Omega$ as their measurement and achieve $\p(\Omega, \rho) = \tr(\Omega_1 
\proj{h_0}) \le 1 - \tr(\proj{\omega}\proj{h_0}) \le \delta$ and $ 
\pp(\Omega, \rho) = 1 - \tr(\Omega_1 \proj{h_1}) = 0$, which is the optimal 
solution in that case.

If $|\braket{h_0}{h_1}| > \sqrt{\delta}$, 
then according to \cite[Theorem 1 \& Corollary 1]{optimal_certification} the 
best pure states certification strategy $\widetilde{\Omega}$ with 
significance level $\delta$ is of the form: $\widetilde{\Omega} = 
\{\widetilde{\Omega_0}, \widetilde{\Omega_1}\}$, where $\widetilde{\Omega_0} 
= \proj{\omega}$ for $\ket{\omega} = 
\sqrt{1-\delta}\frac{\braket{h_0}{h_1}}{|\braket{h_0}{h_1}|} \ket{h_0} - 
\sqrt{\delta} \ket{h_0^\perp}$ and 
$\ket{h_0^\perp} = 
\ket{\widetilde{h_0^\perp}} / \|\widetilde{h_0^\perp}\|$, where 
$\ket{\widetilde{h_0^\perp}} = \ket{h_1} - \braket{h_0}{h_1} \ket{h_0}$. The 
operator $\proj{\omega}$ satisfies $|\braket{h_1}{\omega}|^2 = 
(\sqrt{1-\delta} |\braket{h_0}{h_1}| - \sqrt{\delta} \sqrt{1 - 
|\braket{h_0}{h_1}|^2})^2$ and 
$|\braket{h_0}{\omega}|^2 = 1 - \delta$. Here, for the orthogonal states 
$\proj{\omega}$ and $\proj{\omega^\perp}$, where $\ket{\omega^\perp} = 
\sqrt{\delta} \ket{h_0} + \sqrt{1-\delta} 
\frac{\braket{h_1}{h_0}}{|\braket{h_0}{h_1}|} \ket{h_0^\perp}$ we can find a 
LOCC measurement $\Omega = \{\Omega_0, \Omega_1\}$ \cite{walgate2000local}, 
such that $\tr(\Omega_0 
\proj{\omega}) = 1$ and $\tr(\Omega_1 \proj{\omega^\perp}) = 1$. Alice and 
Bob choose 
$\Omega$ as their measurement and achieve $\p(\Omega, \rho) = \tr(\Omega_1 
\proj{h_0}) \le 1 - \tr(\proj{\omega}\proj{h_0}) = \delta$. To calculate 
$\pp(\Omega, \rho)$ notice that $\ket{h_1}$ is spanned in the basis 
$\ket{\omega}, \ket{\omega^\perp}$, that is $\ket{h_1} = c_1 \ket{\omega} + 
c_2 \ket{\omega^\perp}$. Therefore, we get $\pp(\Omega, \rho) = \tr(\Omega_0 
\proj{h_1}) = |c_1|^2\tr(\Omega_0 
\proj{\omega}) = |c_1|^2$, where we used $\Omega_0\ket{\omega^\perp} = 
\ket{\omega^\perp} - \Omega_1\ket{\omega^\perp} = \ket{\omega^\perp} - 
\ket{\omega^\perp} = 0$. Eventually, $\pp(\Omega, \rho) =|c_1|^2 = 
|\braket{h_1}{\omega}|^2$, which is the optimal solution in that case.

We have showed that the optimal measurement $\Omega$ for product state 
$\rho$ gives 
\begin{equation}
\pp(\Omega, \rho) =
\begin{cases}
0, & |x| \le \sqrt{\delta}, \\
(\sqrt{1-\delta} |x| - \sqrt{\delta} \sqrt{1 - 
|x|^2})^2, & |x| > \sqrt{\delta},
\end{cases}
\end{equation}
where $x \defeq \braket{h_1}{h_0} = \bra{a, b} \left(V^\dagger U \otimes 
\Id_{\AA_E \otimes \BB_E}\right) \ket{a,b}.$ As the function $|x| \mapsto 
\pp(\Omega, \rho)$ is non-decreasing, Alice and Bob choose $\ket{a,b}$ that 
minimizes $|x|$. 

Observe, the choice of the optimal product input state $\rho = \proj{a,b}$ 
is independent of the significance level $\delta$. Therefore, no separable 
state $\rho$ will provide better result than product state. 

Finally, let us assume that $\rho = \proj{a_0, b_0}$ is the optimal product 
input 
state - it minimizes $\proj{a,b} \mapsto |\bra{a, b} \left(V^\dagger U 
\otimes 
\Id_{\AA_E \otimes \BB_E}\right) \ket{a,b}|.$ Let $\AA_E = \C^{e_1}$ and 
$\BB_E = \C^{e_2}$. Using the result from \cite[Lemma 2]{duan2008local} we get
\begin{equation}
\begin{split} 
&\bra{a_0, b_0} \left(V^\dagger U 
\otimes 
\Id_{\AA_E \otimes \BB_E}\right) \ket{a_0,b_0} \\
\in &
W^{\otimes}_{d_1e_1:d_2e_2}(V^\dagger U \otimes 
\Id_{\AA_E \otimes \BB_E})\\ =& 
W^{\otimes}_{d_1:d_2}(V^\dagger U).
\end{split}
\end{equation}
Hence, there is a quantum state $\proj{A_0}$ defined on the 
space $\AA_I$ and a quantum state 
$\proj{B_0}$ on the space $\BB_I$, such that 
$
\bra{a_0, b_0} \left(V^\dagger U 
\otimes 
\Id_{\AA_E \otimes \BB_E}\right) \ket{a_0,b_0} = \bra{A_0, B_0} V^\dagger U 
\ket{A_0,B_0}. 
$
Eventually, we observe that to 
minimize $|x|$ Alice and Bob do not need to use auxiliary systems $\AA_E$ 
and $\BB_E$ and the optimal state can be chosen as $\rho = \proj{A_0,B_0}$, 
which is defined on $\AA_I \otimes \BB_I$. Hence, for $z = \min \{ \left| 
\bra{a, b} V^\dagger U  \ket{a,b} \right|: 
\braket{a}{a} = \braket{b}{b} = 1,\ket{a} \in \C^{d_1}, \ket{b} \in 
\C^{d_2} \} = \min \{|x|: x \in  W^{\otimes}_{d_1:d_2}(V^\dagger U) \}$
 we achieve the desired result
\begin{equation}
\begin{split}
\pp(U, V)= \begin{cases}
0, & z \le \sqrt{\delta}, \\
(\sqrt{1-\delta} z - \sqrt{\delta} \sqrt{1 - 
z^2})^2, & z > \sqrt{\delta}.
\end{cases}
\end{split}
\end{equation} 
\end{proof}

\subsection{Optimal certification strategy}

The proof of Theorem~\ref{thm:unitary_channels} provides insight of the best 
certification strategy that Alice and Bob can utilize. Starting from the 
input state $\rho \in \mathtt{LOCC}$, we see that Alice does not have to 
create any entanglement between $\AA_I$ and $\AA_E$, what is more - she does 
not need to use auxiliary system $\AA_E$ at all. The same holds for Bob's 
systems. That optimal input state $\rho = \proj{a,b}$ defined on $\AA_I 
\otimes \BB_I$ is pure and product 
as well it minimizes $|\bra{a,b} V^\dagger U \ket{a,b}|$. It is independent 
from the significance level $\delta$. 

The explicit form of the optimal 
measurement $\Omega \in \mathtt{LOCC}$ is more complicated and relies 
heavily on the construction provided in \cite{walgate2000local} and the 
proof of Theorem~\ref{thm:unitary_channels}. Also, its description changes 
with $\delta$. Nevertheless, from the 
operational point of view, $\Omega$ can be simply 
realized by Alice and Bob. One party, let's say Alice, prepares an 
appropriate Von Neumann measurement $P_{R_A}$ on system $\AA_O$, where $R_A$ 
is an unitary rotation and sends 
the measurement result $i$ as a classical information to Bob. Then, he 
prepares a Von Neumann measurement $P_{R_B |_i}$ on $\BB_O$, conditioned on 
the information gained from Alice, where $R_B |_i$ is an unitary rotation. 
Bob's measurement result $j$ after classical post-processing $j \mapsto f(j) \in \{0,1\}$ 
indicates if the 
hypothesis $H_0$ should be accepted or rejected. We provide a schematic representation of the 
optimal strategy in Fig.~\ref{schematic2}.

\begin{figure}[htbp]
\includegraphics{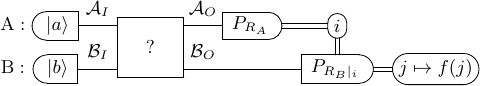}
\caption{ Schematic representation of the optimal operational scenario for
unitary channel certification. Alice ($\mathrm{A}$) prepares the initial 
pure state $\proj{a}$ and Bob ($\mathrm{B}$) prepares $\proj{b}$. The final 
measurement consists of Alice preparing $P_{R_A}$ and sending the result 
label $i$ to Bob, who prepares the measurement $P_{R_B|_i}$. The 
post-processed 
label $j \mapsto f(j) \in \{0,1\}$ of Bob's measurement certificate if $? = \Psi_U$ or $? = 
\Psi_V$.  
\label{schematic2}}
\end{figure}

\subsection{Local vs global certification of unitary 
channels}\label{sec:loc-glob}

In the case where a single party controls both inputs and outputs, the party 
can create entanglement between compound systems $\AA_I \otimes \AA_E$ and 
$\BB_I 
\otimes 
\BB_E$ (similarly between $\AA_O \otimes \AA_E$ and $\BB_O \otimes \BB_E$). 
In other words, the input state $\rho$ can be chosen arbitrarily, also the 
measurement $\Omega = \{\Omega_0, \Omega_1\}$. The
certification result $\pp^*(U, V)$ is expressed 
as~\cite{optimal_certification}:
\begin{equation}
\begin{split}
\pp^*(U, V)= \begin{cases}
0, & v \le \sqrt{\delta}, \\
\left(v\sqrt{1-\delta} -  \sqrt{1 - 
v^2}\sqrt{\delta}\right)^2, & v > \sqrt{\delta},
\end{cases}
\end{split}
\end{equation}
where 
\begin{equation}\label{eq:v}
v \defeq \min\{ |x|: x \in  W(V^\dagger U) \}.
\end{equation} 

As we can see, the certification results differ in local and global 
scenarios. Observe, that both results 
depend on the product $V^\dagger U$, hence, the results are unitarily 
invariant and we may assume that $V = \Id$ for the remainder of this 
section. In the global case we are interested in computing the distance $v( 
U) \defeq  \min\{ |x|: x \in  W(U) \}$, while for the local case we compute 
the 
distance $z_{d_1:d_2}(U) \defeq \min\{ |x|: x \in  W^{\otimes}_{d_1:d_2}(U) 
\}$. The forthcoming comparison of $z_{d_1:d_2}(U)$ and $v(U)$ will enable a 
comparative analysis between 
single-party and two-party certification scenarios. 

From the definition, for all unitary matrices $U$ of size $d_1 \cdot d_2$ it 
holds $v(U) \le z_{d_1:d_2}(U)$. The questions arise, can $v(U)$ be strictly 
lower than $z_{d_1:d_2}(U)$ and how much lower it can be? 

Let $d = d_1 = d_2$. Define 
\begin{equation}
U = \Id_{d^2} - \frac{2}{d} \proj{\Id_d},
\end{equation}
where $\ket{\Id_d} = \sum_{i=1}^d \ket{i,i}$. Observe that the eigenvalues of 
$U$ are $\pm 1$, so $v(U) = 0$ and therefore, $\pp^*(U, \Id_{d^2}) = 0$. On 
the other hand, for any normed vectors $\ket{\psi_1}, \ket{\psi_2} \in \C^d$ 
we have
\begin{equation}
\left| \bra{\psi_1, \psi_2} U \ket{\psi_1, \psi_2} \right| = \left| 1 - 
\frac{2}{d} |\braket{\psi_1}{\overline{\psi_2}}|^2 \right| \ge \frac{d-2}{d},
\end{equation}
where to saturate the last inequality we take $\ket{\psi_1} = \ket{\psi_2} = 
\ket{1}$. That means, $z_{d_1:d_2}(U) = (d-2)/d$ and when the local dimension 
goes to infinity, $d \to \infty$, then we have $\pp(U, \Id_{d^2}) \to 1 - 
\delta$.

We showed that in the extremal case, local and global strategies differ 
significantly. But what about a typical case? For the incoming analysis we 
assume that $U$ is Haar-random unitary 
matrix~\cite{zyczkowski1994random}. We have the following theorem.

\begin{theorem}\label{thm:random}
Let $U$ be a Haar-random unitary matrix of size $d_1 \cdot d_2$. For large 
enough product $d_1d_2$ we have
\begin{equation}
\mathbb{P}\left( z_{d_1:d_2}(U) = 0 \right) \ge 1 - \exp\left(-\frac{\log 
2}{2} \max(d_1^2, d_2^2)\right).
\end{equation}
\end{theorem}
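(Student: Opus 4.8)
The plan is to show that, with high probability, $0 \in W^{\otimes}_{d_1:d_2}(U)$, i.e.\ there exist unit vectors $\ket{\psi_1}\in\C^{d_1}$, $\ket{\psi_2}\in\C^{d_2}$ with $\bra{\psi_1,\psi_2}U\ket{\psi_1,\psi_2}=0$. Since $W^{\otimes}_{d_1:d_2}(U)$ is a connected subset of $\C$ (the continuous image of the connected set of product vectors), it suffices to exhibit two product vectors $\ket{a}$ and $\ket{b}$ whose expectation values $\bra{a}U\ket{a}$ and $\bra{b}U\ket{b}$ lie on opposite sides of the origin — more precisely, such that the origin lies in the convex hull of a path, or simply such that $\bra{a}U\ket{a}$ and $\bra{b}U\ket{b}$ have arguments differing by more than $\pi/2$ so that some convex-combination-like continuous interpolation through product states crosses $0$. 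Actually the cleanest route is: if $\bra{a}U\ket{a}$ and $-\bra{a}U\ket{a}$ (up to a positive scalar) are both attained, or if we can find product vectors realizing values $w$ and $w'$ with $\mathrm{Re}(\bar w w')<0$, then by connectedness the segment of $W^\otimes$ between them — being a connected set — together with a rotational symmetry argument forces $0\in W^\otimes$. I would set this up carefully via the intermediate value theorem along an explicit one-parameter family of product vectors.

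The concrete construction: write $d_* = \max(d_1,d_2)$, say $d_1 \le d_2$ WLOG. Block-decompose $U$ according to the first factor. For a fixed unit vector $\ket{b}\in\C^{d_2}$, the map $\ket{a}\mapsto \bra{a,b}U\ket{a,b} = \bra{a}M_b\ket{a}$ where $M_b\in\C^{d_1\times d_1}$ has entries $(M_b)_{ij}=\bra{b}U_{ij}\ket{b}$ and $U_{ij}$ are the $d_2\times d_2$ blocks of $U$. So $W^{\otimes}_{d_1:d_2}(U) = \bigcup_{\ket{b}} W(M_b)$, a union of classical numerical ranges, each of which is convex. Thus $0\in W^{\otimes}_{d_1:d_2}(U)$ as soon as $0\in W(M_b)$ for some $\ket{b}$, and for a single matrix $M_b$ this happens iff $M_b$ is not ``definite'', e.g.\ it suffices that $M_b$ has two eigenvalues (or two diagonal entries in some basis) whose arguments differ by more than $\pi$ — equivalently, it suffices that $\mathrm{tr}(M_b)$ and some other witness force the spectrum to straddle the origin, but most simply: $0\in W(M_b)$ whenever $M_b$ has a zero diagonal entry in some basis, or whenever $\bra{x}M_b\ket{x}$ and $\bra{y}M_b\ket{y}$ point in opposite half-planes for some $\ket{x},\ket{y}$. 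The key probabilistic estimate is then to show that a Haar-random $U$ makes at least one such $M_b$ non-definite with the claimed probability; I expect the relevant event to be controlled by noting that $\bra{a,b}U\ket{a,b}$ for fixed product $\ket{a,b}$ is (marginally) a highly concentrated random variable near $0$ — its real and imaginary parts behave like $O(1/\sqrt{d_1d_2})$ Gaussians — so that already for a small fixed net of product vectors one finds values scattered around the origin on all sides, and connectedness closes the argument. Quantitatively, one picks a union bound over $O(d_*)$ well-chosen orthonormal product vectors $\ket{i}\otimes\ket{j}$ and uses Levy-type concentration for Haar measure (the function $U\mapsto\bra{i,j}U\ket{i,j}$ is $1$-Lipschitz up to constants) to get that, except with probability $\exp(-c\,d_1 d_2)$ per vector, these expectation values are all within $O(d_*/\sqrt{d_1d_2})$ of the origin; combined with the fact that they cannot all lie in a common half-plane (their sum is $\mathrm{tr}(U)/\max$ over the chosen indices, itself $O(\sqrt{\cdot})$-small), connectedness of $W^\otimes$ yields $0\in W^\otimes$. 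Tracking the constants to land exactly on $\exp(-\tfrac{\log 2}{2}\max(d_1^2,d_2^2))$ is where the bookkeeping lives — the $\max(d_1^2,d_2^2)$ (rather than $d_1 d_2$) in the exponent suggests the union bound is over roughly $2^{d_*}$-many events, i.e.\ over a covering net of the sphere $\C^{d_*}$ at constant scale, each contributing an $\exp(-\Theta(d_1 d_2))$ concentration failure, with $d_1 d_2 \ge d_*$ absorbing the net cardinality.

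The main obstacle I anticipate is making the ``connectedness forces $0$'' step fully rigorous and quantitative simultaneously: it is easy to argue $0\in W^\otimes$ qualitatively once the boundary values are spread around the origin, but to get an explicit probability one must commit to a finite witnessing configuration — I would use the $d_*$ coordinate product vectors $\{\ket{i}\otimes\ket{\sigma(i)}\}$ for a permutation $\sigma$, whose $U$-expectations are the diagonal entries $U_{(i,\sigma(i)),(i,\sigma(i))}$, show these are jointly small and non-degenerate via Haar concentration and the fact that diagonal entries of a Haar unitary sum to $\mathrm{tr}\,U$ which concentrates at $0$, and then invoke that a set of $d_*\ge 2$ points in $\C$ that are ``balanced around'' (not all in an open half-plane through) the origin, lying on a connected curve inside $W^\otimes$, must surround $0$. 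A secondary technical point is handling the complex (rather than real) Lipschitz concentration and the precise Lipschitz constant of $U\mapsto\langle v|U|v\rangle$ on $\mathrm{U}(d_1d_2)$ with the operator-norm metric, which is standard (constant $1$) but must be cited or derived to pin the exponent. I would also double-check the edge cases where $d_1=1$ (then $W^\otimes = W$ and the statement reduces to the known global fact that $0\in W(U)$ whp for Haar $U$) to make sure the stated bound degrades gracefully.
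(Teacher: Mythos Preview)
Your proposal has a genuine gap: the concentration-and-connectedness heuristic you sketch does not close, and you yourself flag the ``connectedness forces $0$'' step as the main obstacle without resolving it. Concentration tells you that each individual value $\bra{i,j}U\ket{i,j}$ is typically $O(1/\sqrt{d_1 d_2})$-small, but it gives you no control over their \emph{arguments}; all of them could lie in the same open half-plane through $0$ while still being tiny, and then neither convexity of $W(M_b)$ nor connectedness of $W^\otimes$ produces $0$. Your reading of the exponent is also off: the $\max(d_1^2,d_2^2)$ is not the footprint of a covering-net union bound, and chasing constants through L\'evy concentration would not land on $\tfrac{\log 2}{2}$.

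The paper's argument avoids all of this by a structural trick you are missing. Fix a vector in the \emph{smaller} factor (you chose the larger one), so that the block $M=(\bra{1}\otimes\Id)U(\ket{1}\otimes\Id)$ has size $d_*=\max(d_1,d_2)$ and $W(M)\subset W^\otimes(U)$. By left-invariance of Haar measure, $M$ has the same distribution as $VM$ for an independent Haar unitary $V$ of size $d_*$; writing the polar decomposition $M=U_M Q_M$ (almost surely $Q_M>0$ for a truncation), one then shows deterministically that $0\in W(V)\Rightarrow 0\in W(VQ_M)$ by reweighting the convex combination of eigenvalues $\sum p_i\lambda_i=0$ to $\sum q_i\lambda_i\bra{x_i}Q_M\ket{x_i}=0$. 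This reduces the whole problem to the known estimate $\mathbb{P}(0\in W(V))\ge 1-\exp(-\tfrac{\log 2}{2}d_*^2)$ for a Haar unitary $V$ of size $d_*$ (Nechita et al.), which is exactly where the constant and the exponent come from. No nets, no Lipschitz bookkeeping, and no connectedness argument are needed.
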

\begin{proof}
Without loss of the generality let us assume that $d_2 \ge d_1$. For a fixed 
state $\ket{1} \in \C^{d_1}$ denote $M = (\bra{1} \otimes \Id) U (\ket{1} 
\otimes \Id)$. We have then $W\left(M\right) \subset 
W^{\otimes}_{d_1:d_2}(U)$. As $U$ is Haar distributed the matrix $M$ has the 
same distribution as $VM$, where $V$ is a Haar-random unitary matrix of size 
$d_2$ independent of $U$. The matrix $M$ is a truncation of $U$, hence, 
almost surely it has full rank ~\cite{zyczkowski2000truncations}. Continuing 
the 
reasoning, if $M = U_M Q_M$ is the polar decomposition of $M$, then $VM$ has 
the same distribution as $VQ_M$. Let $\lambda_i$ be eigenvalues of $V$ with 
corresponding eigenvectors $\ket{x_i}$. Then $\bra{x_i}VQ_M\ket{x_i} = 
\lambda_i \bra{x_i}Q_M\ket{x_i} \in 
W(VQ_M)$ for each $i$, where almost surely $\bra{x_i}Q_M\ket{x_i} > 0$. 
Therefore, if $0 
\in W(V)$, then there is a probability vector $(p_i)_i$ such that $\sum_i 
\lambda_i p_i = 0$. Let us define a probability vector $(q_i)_i$ given 
by $q_i = \frac{p_i}{\bra{x_i}Q_M\ket{x_i}} \left(\sum_j 
\frac{p_j}{\bra{x_j}Q_M\ket{x_j}} \right)^{-1}$. We obtain $\sum_i q_i 
\lambda_i \bra{x_i}Q_M\ket{x_i} = 0$ which implies $0 \in W(VQ_M)$. 
Combining all together we get
\begin{equation}
\begin{split}
\mathbb{P}\left( z_{d_1:d_2}(U) = 0 \right) &\ge \mathbb{P}\left( 0 \in W(M) 
\right) = \mathbb{P}\left( 0 \in W(VQ_M) \right) \\&\ge \mathbb{P}\left( 0 
\in W(V)\right) \ge 1 - \exp\left(-\frac{\log 2}{2} 
d_2^2\right),
\end{split}
\end{equation}
where the last inequality was proven in~\cite[Proposition 
19]{nechita2018almost} for large enough $d_2$.
\end{proof}

According to Theorem~\ref{thm:random}, when we are dealing with 
high-dimensional unitary channels $\Psi_U$ and 
$\Psi_V$, most of them can be perfectly certified. What is more, the optimal 
strategy is local, uses only once one-way classical communication channel
and 
does not need auxiliary systems (see Fig.~\ref{schematic2}). Such strategies 
are the most desirable in terms of used resources such as quantum 
entanglement~\cite{chitambar2019quantum}.

In this section, we learned that the gap between $v(U)$ and $z_{d_1:d_2}(U)$ 
may be huge in extremal cases. Also, we observed that typically in high 
dimensions both quantities are equal to zero. We provide more examples 
comparing local and global strategies in Appendix~\ref{appendix}.

\section{Application to Von Neumann measurement certification}

In the domain of von Neumann measurements certification, we are given two 
unitary matrices $U$ and $V$ of size $d_1 \cdot 
d_2$ and consider two hypotheses:
\begin{itemize}
    \item \(H_0\): The operation is \(P_U\).
    \item \(H_1\): The operation is \(P_V\).
\end{itemize}
The operational
paradigm is similar, yet it harbors a critical distinction: the output
transitions to a classical domain. Upon executing a joint measurement on their
respective quantum states, a classical label $i$ is generated. This label,
mutually acknowledged by Alice and Bob, serves as the foundational element for
subsequent certification processes. Thereafter, both parties then measure their
auxiliary systems, guided by the known label $ i$, to ascertain whether the
joint measurement was null of the alternative hypothesis. As in the unitary 
channel certification Alice and Bob strategy is similar (see 
Fig.~\ref{schematic}). They can prepare the input 
state $\rho \in \mathtt{LOCC}$ and the measurement $\Omega = \{\Omega_0, 
\Omega_1\} \in \mathtt{LOCC}$ while having access to auxiliary systems $\AA_E$ 
and $\BB_E$ of arbitrary dimension. Let $\widetilde{\pp}(U, V)$ indicates 
minimized 
probability of type II error under the condition that $\delta$ is given 
significance level for the introduced certification scheme of Von Neumann 
measurements. Then, we summarize our findings with the following proposition:

\begin{prop}\label{thm:measurements}

Consider the problem of two-point certification of Von Neumann measurements 
defined for unitary matrices $U$ and $V$ of size $d_1 \cdot 
d_2$, and statistical significance 
$\delta \in [0,1]$ with hypotheses 
\begin{itemize}
    \item \(H_0\): The operation is \(P_U\).
    \item \(H_1\): The operation is \(P_V\).
\end{itemize}
The most powerful test utilizing LOCC operations provides 
\begin{equation}
\begin{split}
\widetilde{\pp}(U, V) \ge \max\{& \pp(UE, VF): E, F \mbox{ is unitary }\\
				&\mbox{and 
diagonal}\}.
\end{split}
\end{equation} 
\end{prop}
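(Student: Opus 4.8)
The plan is to bound $\widetilde{\pp}(U,V)$ from below by the unitary-channel quantity $\pp$ of Theorem~\ref{thm:unitary_channels}, exploiting the gauge freedom of von Neumann measurements to produce the maximisation over the diagonal unitaries $E$ and $F$. The starting observation is that a von Neumann measurement is blind to the phases of the columns of its defining unitary: since the effects of $P_U$ are the projectors $\proj{u_i}$ with $\ket{u_i}=U\ket{i}$, and $UE\ket{i}=E_{ii}\ket{u_i}$ for any diagonal unitary $E$, one has $P_{UE}=P_U$ and $P_{VF}=P_V$ for all diagonal $E,F$. Next I would write the measure-and-prepare channel as $P_U=\mathcal{M}\circ\Psi_{U^\dagger}$, where $\mathcal{M}(\sigma)=\sum_i\bra{i}\sigma\ket{i}\proj{i}$ is the complete dephasing in the computational basis; because that basis is a product basis on $\AA_O\otimes\BB_O$, the map $\mathcal{M}=\mathcal{M}_{\AA}\otimes\mathcal{M}_{\BB}$ is local, hence LOCC. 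Combining the two facts gives the key factorisation $P_U=\mathcal{M}\circ\Psi_{(UE)^\dagger}$ and $P_V=\mathcal{M}\circ\Psi_{(VF)^\dagger}$, valid simultaneously for \emph{every} pair of diagonal unitaries $E,F$ with one and the same post-processing $\mathcal{M}$.

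The crux of the reduction is to read this factorisation in the correct direction. Here $\mathcal{M}$ destroys information, so the measurement pair $(P_U,P_V)$ is a \emph{degradation} of the channel pair $(\Psi_{(UE)^\dagger},\Psi_{(VF)^\dagger})$, and not the other way around. Hence any LOCC certification test for the measurements can be executed on the channels—apply the channel, apply the local dephasing $\mathcal{M}$, then run the measurement test on the resulting classical label and the untouched ancillae—reproducing its type~I and type~II errors exactly. Therefore the channels are \emph{at least as easy} to certify as the measurements, which gives the lower bound $\widetilde{\pp}(U,V)\ge\pp\big((UE)^\dagger,(VF)^\dagger\big)$ for every diagonal $E,F$, rather than the reverse inequality that a careless application of data processing would produce. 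Importantly, this step is carried out at the level of channels and makes no assumption on the input state or on the final LOCC measurement, so arbitrary ancillae $\AA_E,\BB_E$ of unbounded dimension are covered automatically and need no separate treatment.

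What remains—and this is where I expect the real work to lie—is to match the bound just obtained with the claimed expression $\max_{E,F}\pp(UE,VF)$. By Theorem~\ref{thm:unitary_channels} the quantity $\pp\big((UE)^\dagger,(VF)^\dagger\big)$ is a fixed nondecreasing function of the distance $z$ from the origin to $W^{\otimes}_{d_1:d_2}\big((VF)(UE)^\dagger\big)$, whereas the target $\pp(UE,VF)$ involves that distance for $(VF)^\dagger(UE)$. In the bound the two gauges are adjacent and effectively collapse to a single diagonal unitary, while in the target they are separated by the core $V^\dagger U$ and remain independent, so reconciling the two is not automatic. The tools I would use to close this are the symmetries $z(X)=z(X^\dagger)=z(X^{T})$ of the product numerical range—valid because product vectors are stable under complex conjugation and transposition—together with the fact that $E,F$ sweep out \emph{all} diagonal unitaries, so that after applying the adjoint symmetry the two maximisation problems can hopefully be brought into the same form. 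Verifying that these gauge-optimised distances actually agree (or that the degradation bound dominates), and thus that it is tight enough to yield exactly $\max_{E,F}\pp(UE,VF)$, is the single step I expect to be the main obstacle (a sanity check is the case $d_1d_2=2$, where the numerical range is fixed by the trace and determinant and both one- and two-sided gauges reduce to a single phase, so the two forms manifestly coincide).
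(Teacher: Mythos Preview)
Your approach coincides with the paper's: the same factorisation $P_U=\Delta\circ\Psi_{(UE)^\dagger}$ with the local dephasing $\Delta=\Delta_A\otimes\Delta_B$, followed by the same degradation argument (the paper phrases it as taking the optimal LOCC strategy $(\rho_*,\Omega_*)$ for the measurement problem and pre-composing the measurement with $\Delta_A\otimes\Delta_B\otimes\Id$ to obtain an LOCC strategy for the unitary pair $\Psi_{(UE)^\dagger}$ versus $\Psi_{(VF)^\dagger}$), yielding $\widetilde{\pp}(U,V)\ge \pp\big((UE)^\dagger,(VF)^\dagger\big)$ for all diagonal unitaries $E,F$.

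The mismatch you isolate in your final paragraph---between the bound on $\pp\big((UE)^\dagger,(VF)^\dagger\big)$ actually produced by the degradation step and the stated target $\pp(UE,VF)$---is not treated in the paper's proof at all: after the argument for the daggered channels it writes the conclusion directly as $\pp(UE,VF)\le\widetilde{\pp}(P_U,P_V)$ without comment. In other words, the step you single out as ``the main obstacle'' is exactly the point on which the paper is silent. Your proposed route via the symmetries $z(X)=z(X^\dagger)=z(X^{T})$ of the product numerical range is already more explicit than the paper's treatment, but note that these alone do not obviously close the gap: they reduce the question to comparing $z(VDU^\dagger)$ (one diagonal gauge, since $FE^\dagger$ collapses to a single $D$) with $z(F^\dagger V^\dagger UE)$ (two independent gauges), and the conjugation relating $CD$ to $DC$ that would settle this for the ordinary numerical range is not in general a \emph{local} unitary, so the product numerical range need not be preserved.
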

\begin{proof}
Observe that action of each von Neumann measurement $P_U$, can
be expressed as $P_U = \Delta \Psi_{{(UE)}^{\dag}}$. Here, $\Delta$
is the completely dephasing channel $\Delta(X) = \sum_i \bra{i} X \ket{i} 
\proj{i}$ and $E$ is a diagonal unitary matrix. The channel $\Delta$ acting on 
$\AA_O \otimes \BB_O$ is equivalent to $\Delta = \Delta_A \otimes \Delta_B$, 
where $\Delta_A, \Delta_B$ are completely dephasing channels acting on $\AA_O$ 
and $\BB_O$, respectively. Let us fix $\rho_*, \Omega_* \in 
\mathtt{LOCC}$ as the optimal certification strategy achieving 
$\widetilde{\pp}(U, V)$. For the unitary channel certification between 
$\Psi_{{(UE)}^{\dag}}$ and $\Psi_{{(VF)}^{\dag}}$, where $E, F$ are diagonal, 
unitary matrices, we have 
\begin{equation}
\pp(\Omega_* (\Delta_A \otimes \Delta_B 
\otimes \Id_{\AA_E 
\otimes \BB_E}), \rho_*) = \widetilde{\pp}(P_U, P_V).
\end{equation}
Eventually, we get $\pp(UE, VF) \le \widetilde{\pp}(P_U, P_V)$, which ends the 
proof.
\end{proof}

\section{Summary}

In this study, we explored a scenario where two parties, having access to a 
shared quantum unitary channel, engage in its certification. Each party 
conducts individual measurements on their respective systems following 
channel utilization. We demonstrated in Theorem~\ref{thm:unitary_channels} 
that the certification challenge can be 
effectively transformed into an optimization problem involving the product 
numerical range. Following the proof of Theorem~\ref{thm:unitary_channels}, 
we concluded that the optimal local strategy does not need usage of 
auxiliary systems and parties involved need to utilize one-way classical 
communication channel. We provided the original considered scheme in 
Fig.~\ref{schematic} and the optimal and resource efficient scheme in 
Fig.~\ref{schematic2}.

In Section~\ref{sec:loc-glob} we compared local certification strategies 
with global ones. We observed that in the extremal case a global (single 
party) strategy can make no type II errors, while for the best local 
strategy the probability of making type II error approaches $1 - \delta$, 
where $\delta$ is the significance level. However, in 
Theorem~\ref{thm:random} we proved that typically, for high-dimensional 
unitary channels, local certification strategies are optimal and what is 
more, they make no type II errors. Assuming Haar distribution of unitary 
channels of size $d$, the probability of perfect local certification is no 
smaller 
than $1 - \exp(- \frac{\log 2}{2} d)$, which is approaching exponentially 
fast to $1$ as 
$d \to \infty$.

Regarding von Neumann measurements, our findings in 
Proposition~\ref{thm:measurements}
provide insights into the lower bound of the type II error, thus 
contributing partial but significant knowledge to the field of quantum 
measurement certification.

Our work opens new paths for future research. They could include more 
advanced comparison of $v(U)$ and $z_{d_1:d_2}(U)$ as well as finding 
effective ways of computing $z_{d_1:d_2}(U)$ for arbitrary $U$. From the 
operational point of view, the scenario where many players are involved in 
the local certification process seems to be interesting to explore. As the 
results of our work are based strongly on LOCC measurements' construction 
provided in \cite{walgate2000local} which is also valid in many players 
scenario, quick analysis suggests that the Eq.~\eqref{prob-local} will be 
valid in that case, but with replaced 
\begin{equation}
z \leftarrow \min\{|x|: x\ \in W_{d_1:\ldots:d_N}^\otimes(V^\dagger U)\},
\end{equation}
where $W_{d_1:\ldots:d_N}^\otimes$ is the product numerical range defined 
for $N$ parties~\cite{duan2008local}.

\section*{Acknowledgements}
MS and KH acknowledge support received from The National Centre for Research and
Development (NCBR), Poland, under Project No. POIR.01.01.01-00-0061/22, titled
\textit{VeloxQ: Utilizing Dynamic Systems in Decision-Making Processes Based on
Knowledge Acquired Through Machine Learning, Across Various Levels of
Complexity, in Industrial Process Optimization}, that aims to develop digital
solutions for solving combinatorial optimization problems. The focus is on
leveraging quantum-inspired algorithms and artificial intelligence to enhance
decision-making processes in industrial optimization contexts. The insights and
knowledge gained during the research phase of this project have significantly
shaped the content and findings of this article.

RK acknowledges financial support by the National Science Centre, Poland, 
under the contract 
number 2021/03/Y/ST2/00193 within the QuantERA II Programme that has 
received funding from the European Union’s Horizon 2020 research and 
innovation programme under Grant Agreement No 101017733.

This project was supported by the National Science Center (NCN), Poland, under
Projects: SONATA BIS 10, No. 2020/38/E/ST3/00269 (LP, BG) and OPUS, No
2022/47/B/ST6/02380 (ZP).

\bibliography{certification}

\begin{thebibliography}{26}%
\makeatletter
\providecommand \@ifxundefined [1]{%
 \@ifx{#1\undefined}
}%
\providecommand \@ifnum [1]{%
 \ifnum #1\expandafter \@firstoftwo
 \else \expandafter \@secondoftwo
 \fi
}%
\providecommand \@ifx [1]{%
 \ifx #1\expandafter \@firstoftwo
 \else \expandafter \@secondoftwo
 \fi
}%
\providecommand \natexlab [1]{#1}%
\providecommand \enquote  [1]{``#1''}%
\providecommand \bibnamefont  [1]{#1}%
\providecommand \bibfnamefont [1]{#1}%
\providecommand \citenamefont [1]{#1}%
\providecommand \href@noop [0]{\@secondoftwo}%
\providecommand \href [0]{\begingroup \@sanitize@url \@href}%
\providecommand \@href[1]{\@@startlink{#1}\@@href}%
\providecommand \@@href[1]{\endgroup#1\@@endlink}%
\providecommand \@sanitize@url [0]{\catcode `\\12\catcode `\$12\catcode
  `\&12\catcode `\#12\catcode `\^12\catcode `\_12\catcode `\%12\relax}%
\providecommand \@@startlink[1]{}%
\providecommand \@@endlink[0]{}%
\providecommand \url  [0]{\begingroup\@sanitize@url \@url }%
\providecommand \@url [1]{\endgroup\@href {#1}{\urlprefix }}%
\providecommand \urlprefix  [0]{URL }%
\providecommand \Eprint [0]{\href }%
\providecommand \doibase [0]{http://dx.doi.org/}%
\providecommand \selectlanguage [0]{\@gobble}%
\providecommand \bibinfo  [0]{\@secondoftwo}%
\providecommand \bibfield  [0]{\@secondoftwo}%
\providecommand \translation [1]{[#1]}%
\providecommand \BibitemOpen [0]{}%
\providecommand \bibitemStop [0]{}%
\providecommand \bibitemNoStop [0]{.\EOS\space}%
\providecommand \EOS [0]{\spacefactor3000\relax}%
\providecommand \BibitemShut  [1]{\csname bibitem#1\endcsname}%
\let\auto@bib@innerbib\@empty
\bibitem [{\citenamefont {Helstrom}(1970)}]{helstrom}%
  \BibitemOpen
  \bibfield  {author} {\bibinfo {author} {\bibfnamefont {C.~W.}\ \bibnamefont
  {Helstrom}},\ }\href {\doibase https://doi.org/10.1007/BF01007479} {\bibfield
   {journal} {\bibinfo  {journal} {Journal of Statistical Physics}\ }\textbf
  {\bibinfo {volume} {1}},\ \bibinfo {pages} {231} (\bibinfo {year}
  {1970})}\BibitemShut {NoStop}%
\bibitem [{\citenamefont {Ghosh}\ \emph {et~al.}(2004)\citenamefont {Ghosh},
  \citenamefont {Kar}, \citenamefont {Roy},\ and\ \citenamefont
  {Sarkar}}]{ghosh2004distinguishability}%
  \BibitemOpen
  \bibfield  {author} {\bibinfo {author} {\bibfnamefont {S.}~\bibnamefont
  {Ghosh}}, \bibinfo {author} {\bibfnamefont {G.}~\bibnamefont {Kar}}, \bibinfo
  {author} {\bibfnamefont {A.}~\bibnamefont {Roy}}, \ and\ \bibinfo {author}
  {\bibfnamefont {D.}~\bibnamefont {Sarkar}},\ }\href@noop {} {\bibfield
  {journal} {\bibinfo  {journal} {Physical Review A—Atomic, Molecular, and
  Optical Physics}\ }\textbf {\bibinfo {volume} {70}},\ \bibinfo {pages}
  {022304} (\bibinfo {year} {2004})}\BibitemShut {NoStop}%
\bibitem [{\citenamefont {Nathanson}(2005)}]{nathanson2005distinguishing}%
  \BibitemOpen
  \bibfield  {author} {\bibinfo {author} {\bibfnamefont {M.}~\bibnamefont
  {Nathanson}},\ }\href@noop {} {\bibfield  {journal} {\bibinfo  {journal}
  {Journal of Mathematical Physics}\ }\textbf {\bibinfo {volume} {46}}
  (\bibinfo {year} {2005})}\BibitemShut {NoStop}%
\bibitem [{\citenamefont {Duan}\ \emph {et~al.}(2009)\citenamefont {Duan},
  \citenamefont {Feng}, \citenamefont {Xin},\ and\ \citenamefont
  {Ying}}]{duan2009distinguishability}%
  \BibitemOpen
  \bibfield  {author} {\bibinfo {author} {\bibfnamefont {R.}~\bibnamefont
  {Duan}}, \bibinfo {author} {\bibfnamefont {Y.}~\bibnamefont {Feng}}, \bibinfo
  {author} {\bibfnamefont {Y.}~\bibnamefont {Xin}}, \ and\ \bibinfo {author}
  {\bibfnamefont {M.}~\bibnamefont {Ying}},\ }\href@noop {} {\bibfield
  {journal} {\bibinfo  {journal} {IEEE Transactions on Information Theory}\
  }\textbf {\bibinfo {volume} {55}},\ \bibinfo {pages} {1320} (\bibinfo {year}
  {2009})}\BibitemShut {NoStop}%
\bibitem [{\citenamefont {Childs}\ \emph {et~al.}(2013)\citenamefont {Childs},
  \citenamefont {Leung}, \citenamefont {Man{\v{c}}inska},\ and\ \citenamefont
  {Ozols}}]{childs2013framework}%
  \BibitemOpen
  \bibfield  {author} {\bibinfo {author} {\bibfnamefont {A.~M.}\ \bibnamefont
  {Childs}}, \bibinfo {author} {\bibfnamefont {D.}~\bibnamefont {Leung}},
  \bibinfo {author} {\bibfnamefont {L.}~\bibnamefont {Man{\v{c}}inska}}, \ and\
  \bibinfo {author} {\bibfnamefont {M.}~\bibnamefont {Ozols}},\ }\href@noop {}
  {\bibfield  {journal} {\bibinfo  {journal} {Communications in Mathematical
  Physics}\ }\textbf {\bibinfo {volume} {323}},\ \bibinfo {pages} {1121}
  (\bibinfo {year} {2013})}\BibitemShut {NoStop}%
\bibitem [{\citenamefont {Zhang}\ \emph {et~al.}(2017)\citenamefont {Zhang},
  \citenamefont {Guo}, \citenamefont {Luo},\ and\ \citenamefont
  {Tan}}]{zhang2017local}%
  \BibitemOpen
  \bibfield  {author} {\bibinfo {author} {\bibfnamefont {X.}~\bibnamefont
  {Zhang}}, \bibinfo {author} {\bibfnamefont {C.}~\bibnamefont {Guo}}, \bibinfo
  {author} {\bibfnamefont {W.}~\bibnamefont {Luo}}, \ and\ \bibinfo {author}
  {\bibfnamefont {X.}~\bibnamefont {Tan}},\ }\href@noop {} {\bibfield
  {journal} {\bibinfo  {journal} {arXiv preprint arXiv:1712.08830}\ } (\bibinfo
  {year} {2017})}\BibitemShut {NoStop}%
\bibitem [{\citenamefont {Fan}(2004)}]{fan2004distinguishability}%
  \BibitemOpen
  \bibfield  {author} {\bibinfo {author} {\bibfnamefont {H.}~\bibnamefont
  {Fan}},\ }\href@noop {} {\bibfield  {journal} {\bibinfo  {journal} {Physical
  Review Letters}\ }\textbf {\bibinfo {volume} {92}},\ \bibinfo {pages}
  {177905} (\bibinfo {year} {2004})}\BibitemShut {NoStop}%
\bibitem [{\citenamefont {Zhang}\ \emph {et~al.}(2016)\citenamefont {Zhang},
  \citenamefont {Gao}, \citenamefont {Cao}, \citenamefont {Qin},\ and\
  \citenamefont {Wen}}]{zhang2016local}%
  \BibitemOpen
  \bibfield  {author} {\bibinfo {author} {\bibfnamefont {Z.-C.}\ \bibnamefont
  {Zhang}}, \bibinfo {author} {\bibfnamefont {F.}~\bibnamefont {Gao}}, \bibinfo
  {author} {\bibfnamefont {Y.}~\bibnamefont {Cao}}, \bibinfo {author}
  {\bibfnamefont {S.-J.}\ \bibnamefont {Qin}}, \ and\ \bibinfo {author}
  {\bibfnamefont {Q.-Y.}\ \bibnamefont {Wen}},\ }\href@noop {} {\bibfield
  {journal} {\bibinfo  {journal} {Physical Review A}\ }\textbf {\bibinfo
  {volume} {93}},\ \bibinfo {pages} {012314} (\bibinfo {year}
  {2016})}\BibitemShut {NoStop}%
\bibitem [{\citenamefont {Calsamiglia}\ \emph {et~al.}(2010)\citenamefont
  {Calsamiglia}, \citenamefont {De~Vicente}, \citenamefont {Mu{\~n}oz-Tapia},\
  and\ \citenamefont {Bagan}}]{calsamiglia2010local}%
  \BibitemOpen
  \bibfield  {author} {\bibinfo {author} {\bibfnamefont {J.}~\bibnamefont
  {Calsamiglia}}, \bibinfo {author} {\bibfnamefont {J.}~\bibnamefont
  {De~Vicente}}, \bibinfo {author} {\bibfnamefont {R.}~\bibnamefont
  {Mu{\~n}oz-Tapia}}, \ and\ \bibinfo {author} {\bibfnamefont {E.}~\bibnamefont
  {Bagan}},\ }\href@noop {} {\bibfield  {journal} {\bibinfo  {journal}
  {Physical Review Letters}\ }\textbf {\bibinfo {volume} {105}},\ \bibinfo
  {pages} {080504} (\bibinfo {year} {2010})}\BibitemShut {NoStop}%
\bibitem [{\citenamefont {Matthews}\ \emph {et~al.}(2010)\citenamefont
  {Matthews}, \citenamefont {Piani},\ and\ \citenamefont
  {Watrous}}]{matthews2010entanglement}%
  \BibitemOpen
  \bibfield  {author} {\bibinfo {author} {\bibfnamefont {W.}~\bibnamefont
  {Matthews}}, \bibinfo {author} {\bibfnamefont {M.}~\bibnamefont {Piani}}, \
  and\ \bibinfo {author} {\bibfnamefont {J.}~\bibnamefont {Watrous}},\
  }\href@noop {} {\bibfield  {journal} {\bibinfo  {journal} {Physical Review
  A—Atomic, Molecular, and Optical Physics}\ }\textbf {\bibinfo {volume}
  {82}},\ \bibinfo {pages} {032302} (\bibinfo {year} {2010})}\BibitemShut
  {NoStop}%
\bibitem [{\citenamefont {Bae}(2015)}]{bae2015discrimination}%
  \BibitemOpen
  \bibfield  {author} {\bibinfo {author} {\bibfnamefont {J.}~\bibnamefont
  {Bae}},\ }\href@noop {} {\bibfield  {journal} {\bibinfo  {journal}
  {Scientific Reports}\ }\textbf {\bibinfo {volume} {5}},\ \bibinfo {pages}
  {18270} (\bibinfo {year} {2015})}\BibitemShut {NoStop}%
\bibitem [{\citenamefont {Duan}\ \emph {et~al.}(2007)\citenamefont {Duan},
  \citenamefont {Feng},\ and\ \citenamefont {Ying}}]{duan2007entanglement}%
  \BibitemOpen
  \bibfield  {author} {\bibinfo {author} {\bibfnamefont {R.}~\bibnamefont
  {Duan}}, \bibinfo {author} {\bibfnamefont {Y.}~\bibnamefont {Feng}}, \ and\
  \bibinfo {author} {\bibfnamefont {M.}~\bibnamefont {Ying}},\ }\href@noop {}
  {\bibfield  {journal} {\bibinfo  {journal} {Physical Review Letters}\
  }\textbf {\bibinfo {volume} {98}},\ \bibinfo {pages} {100503} (\bibinfo
  {year} {2007})}\BibitemShut {NoStop}%
\bibitem [{\citenamefont {Duan}\ \emph {et~al.}(2008)\citenamefont {Duan},
  \citenamefont {Feng},\ and\ \citenamefont {Ying}}]{duan2008local}%
  \BibitemOpen
  \bibfield  {author} {\bibinfo {author} {\bibfnamefont {R.}~\bibnamefont
  {Duan}}, \bibinfo {author} {\bibfnamefont {Y.}~\bibnamefont {Feng}}, \ and\
  \bibinfo {author} {\bibfnamefont {M.}~\bibnamefont {Ying}},\ }\href@noop {}
  {\bibfield  {journal} {\bibinfo  {journal} {Physical Review Letters}\
  }\textbf {\bibinfo {volume} {100}},\ \bibinfo {pages} {020503} (\bibinfo
  {year} {2008})}\BibitemShut {NoStop}%
\bibitem [{\citenamefont {Cao}\ \emph {et~al.}(2016)\citenamefont {Cao},
  \citenamefont {Yang}, \citenamefont {Zhang}, \citenamefont {Tian},
  \citenamefont {Gao},\ and\ \citenamefont {Wen}}]{cao2016minimal}%
  \BibitemOpen
  \bibfield  {author} {\bibinfo {author} {\bibfnamefont {T.-Q.}\ \bibnamefont
  {Cao}}, \bibinfo {author} {\bibfnamefont {Y.-H.}\ \bibnamefont {Yang}},
  \bibinfo {author} {\bibfnamefont {Z.-C.}\ \bibnamefont {Zhang}}, \bibinfo
  {author} {\bibfnamefont {G.-J.}\ \bibnamefont {Tian}}, \bibinfo {author}
  {\bibfnamefont {F.}~\bibnamefont {Gao}}, \ and\ \bibinfo {author}
  {\bibfnamefont {Q.-Y.}\ \bibnamefont {Wen}},\ }\href@noop {} {\bibfield
  {journal} {\bibinfo  {journal} {Scientific Reports}\ }\textbf {\bibinfo
  {volume} {6}},\ \bibinfo {pages} {26696} (\bibinfo {year}
  {2016})}\BibitemShut {NoStop}%
\bibitem [{\citenamefont {Emmert-Streib}\ and\ \citenamefont
  {Dehmer}(2019)}]{understanding_hypotesis_testing}%
  \BibitemOpen
  \bibfield  {author} {\bibinfo {author} {\bibfnamefont {F.}~\bibnamefont
  {Emmert-Streib}}\ and\ \bibinfo {author} {\bibfnamefont {M.}~\bibnamefont
  {Dehmer}},\ }\href {\doibase https://doi.org/10.3390/make1030054} {\bibfield
  {journal} {\bibinfo  {journal} {Machine Learning and Knowledge Extraction}\
  }\textbf {\bibinfo {volume} {1}},\ \bibinfo {pages} {945} (\bibinfo {year}
  {2019})}\BibitemShut {NoStop}%
\bibitem [{\citenamefont {Gawron}\ \emph {et~al.}(2010)\citenamefont {Gawron},
  \citenamefont {Pucha{\l}a}, \citenamefont {Miszczak}, \citenamefont
  {Skowronek},\ and\ \citenamefont {{\.Z}yczkowski}}]{product_usefull}%
  \BibitemOpen
  \bibfield  {author} {\bibinfo {author} {\bibfnamefont {P.}~\bibnamefont
  {Gawron}}, \bibinfo {author} {\bibfnamefont {Z.}~\bibnamefont {Pucha{\l}a}},
  \bibinfo {author} {\bibfnamefont {J.~A.}\ \bibnamefont {Miszczak}}, \bibinfo
  {author} {\bibfnamefont {{\L}.}~\bibnamefont {Skowronek}}, \ and\ \bibinfo
  {author} {\bibfnamefont {K.}~\bibnamefont {{\.Z}yczkowski}},\ }\href
  {\doibase https://doi.org/10.1063/1.3496900} {\bibfield  {journal} {\bibinfo
  {journal} {Journal of Mathematical Physics}\ }\textbf {\bibinfo {volume}
  {51}},\ \bibinfo {pages} {102204} (\bibinfo {year} {2010})}\BibitemShut
  {NoStop}%
\bibitem [{\citenamefont {Watrous}(2018)}]{watrous}%
  \BibitemOpen
  \bibfield  {author} {\bibinfo {author} {\bibfnamefont {J.}~\bibnamefont
  {Watrous}},\ }\href@noop {} {\emph {\bibinfo {title} {The theory of quantum
  information}}}\ (\bibinfo  {publisher} {Cambridge University Press},\
  \bibinfo {year} {2018})\BibitemShut {NoStop}%
\bibitem [{\citenamefont {Fiedler}(1981)}]{numerical_range}%
  \BibitemOpen
  \bibfield  {author} {\bibinfo {author} {\bibfnamefont {M.}~\bibnamefont
  {Fiedler}},\ }\href {\doibase https://doi.org/10.1016/0024-3795(81)90169-5}
  {\bibfield  {journal} {\bibinfo  {journal} {Linear Algebra and its
  Applications}\ }\textbf {\bibinfo {volume} {37}},\ \bibinfo {pages} {81}
  (\bibinfo {year} {1981})}\BibitemShut {NoStop}%
\bibitem [{\citenamefont {Dunkl}\ \emph {et~al.}(2011)\citenamefont {Dunkl},
  \citenamefont {Gawron}, \citenamefont {Holbrook}, \citenamefont {Miszczak},
  \citenamefont {Pucha{\l}a},\ and\ \citenamefont
  {{\.Z}yczkowski}}]{numerical_shadow}%
  \BibitemOpen
  \bibfield  {author} {\bibinfo {author} {\bibfnamefont {C.~F.}\ \bibnamefont
  {Dunkl}}, \bibinfo {author} {\bibfnamefont {P.}~\bibnamefont {Gawron}},
  \bibinfo {author} {\bibfnamefont {J.~A.}\ \bibnamefont {Holbrook}}, \bibinfo
  {author} {\bibfnamefont {J.~A.}\ \bibnamefont {Miszczak}}, \bibinfo {author}
  {\bibfnamefont {Z.}~\bibnamefont {Pucha{\l}a}}, \ and\ \bibinfo {author}
  {\bibfnamefont {K.}~\bibnamefont {{\.Z}yczkowski}},\ }\href {\doibase
  https://doi.org/10.1088/1751-8113/44/33/335301} {\bibfield  {journal}
  {\bibinfo  {journal} {Journal of Physics A: Mathematical and Theoretical}\
  }\textbf {\bibinfo {volume} {44}},\ \bibinfo {pages} {335301} (\bibinfo
  {year} {2011})}\BibitemShut {NoStop}%
\bibitem [{\citenamefont {Pucha{\l}a}\ \emph {et~al.}(2011)\citenamefont
  {Pucha{\l}a}, \citenamefont {Gawron}, \citenamefont {Miszczak}, \citenamefont
  {Skowronek}, \citenamefont {Choi},\ and\ \citenamefont
  {{\.Z}yczkowski}}]{product}%
  \BibitemOpen
  \bibfield  {author} {\bibinfo {author} {\bibfnamefont {Z.}~\bibnamefont
  {Pucha{\l}a}}, \bibinfo {author} {\bibfnamefont {P.}~\bibnamefont {Gawron}},
  \bibinfo {author} {\bibfnamefont {J.~A.}\ \bibnamefont {Miszczak}}, \bibinfo
  {author} {\bibfnamefont {{\L}.}~\bibnamefont {Skowronek}}, \bibinfo {author}
  {\bibfnamefont {M.-D.}\ \bibnamefont {Choi}}, \ and\ \bibinfo {author}
  {\bibfnamefont {K.}~\bibnamefont {{\.Z}yczkowski}},\ }\href {\doibase
  https://doi.org/10.1016/j.laa.2010.08.026} {\bibfield  {journal} {\bibinfo
  {journal} {Linear Algebra and its Applications}\ }\textbf {\bibinfo {volume}
  {434}},\ \bibinfo {pages} {327} (\bibinfo {year} {2011})}\BibitemShut
  {NoStop}%
\bibitem [{\citenamefont {Lewandowska}\ \emph {et~al.}(2021)\citenamefont
  {Lewandowska}, \citenamefont {Krawiec}, \citenamefont {Kukulski},
  \citenamefont {Pawela},\ and\ \citenamefont
  {Pucha{\l}a}}]{optimal_certification}%
  \BibitemOpen
  \bibfield  {author} {\bibinfo {author} {\bibfnamefont {P.}~\bibnamefont
  {Lewandowska}}, \bibinfo {author} {\bibfnamefont {A.}~\bibnamefont
  {Krawiec}}, \bibinfo {author} {\bibfnamefont {R.}~\bibnamefont {Kukulski}},
  \bibinfo {author} {\bibfnamefont {{\L}.}~\bibnamefont {Pawela}}, \ and\
  \bibinfo {author} {\bibfnamefont {Z.}~\bibnamefont {Pucha{\l}a}},\ }\href
  {\doibase https://doi.org/10.1038/s41598-021-81325-1} {\bibfield  {journal}
  {\bibinfo  {journal} {Scientific Reports}\ }\textbf {\bibinfo {volume}
  {11}},\ \bibinfo {pages} {1} (\bibinfo {year} {2021})}\BibitemShut {NoStop}%
\bibitem [{\citenamefont {Walgate}\ \emph {et~al.}(2000)\citenamefont
  {Walgate}, \citenamefont {Short}, \citenamefont {Hardy},\ and\ \citenamefont
  {Vedral}}]{walgate2000local}%
  \BibitemOpen
  \bibfield  {author} {\bibinfo {author} {\bibfnamefont {J.}~\bibnamefont
  {Walgate}}, \bibinfo {author} {\bibfnamefont {A.~J.}\ \bibnamefont {Short}},
  \bibinfo {author} {\bibfnamefont {L.}~\bibnamefont {Hardy}}, \ and\ \bibinfo
  {author} {\bibfnamefont {V.}~\bibnamefont {Vedral}},\ }\href@noop {}
  {\bibfield  {journal} {\bibinfo  {journal} {Physical Review Letters}\
  }\textbf {\bibinfo {volume} {85}},\ \bibinfo {pages} {4972} (\bibinfo {year}
  {2000})}\BibitemShut {NoStop}%
\bibitem [{\citenamefont {{\.Z}yczkowski}\ and\ \citenamefont
  {Kuś}(1994)}]{zyczkowski1994random}%
  \BibitemOpen
  \bibfield  {author} {\bibinfo {author} {\bibfnamefont {K.}~\bibnamefont
  {{\.Z}yczkowski}}\ and\ \bibinfo {author} {\bibfnamefont {M.}~\bibnamefont
  {Kuś}},\ }\href@noop {} {\bibfield  {journal} {\bibinfo  {journal} {Journal
  of Physics A: Mathematical and General}\ }\textbf {\bibinfo {volume} {27}},\
  \bibinfo {pages} {4235} (\bibinfo {year} {1994})}\BibitemShut {NoStop}%
\bibitem [{\citenamefont {{\.Z}yczkowski}\ and\ \citenamefont
  {Sommers}(2000)}]{zyczkowski2000truncations}%
  \BibitemOpen
  \bibfield  {author} {\bibinfo {author} {\bibfnamefont {K.}~\bibnamefont
  {{\.Z}yczkowski}}\ and\ \bibinfo {author} {\bibfnamefont {H.-J.}\
  \bibnamefont {Sommers}},\ }\href@noop {} {\bibfield  {journal} {\bibinfo
  {journal} {Journal of Physics A: Mathematical and General}\ }\textbf
  {\bibinfo {volume} {33}},\ \bibinfo {pages} {2045} (\bibinfo {year}
  {2000})}\BibitemShut {NoStop}%
\bibitem [{\citenamefont {Nechita}\ \emph {et~al.}(2018)\citenamefont
  {Nechita}, \citenamefont {Pucha{\l}a}, \citenamefont {Pawela},\ and\
  \citenamefont {{\.Z}yczkowski}}]{nechita2018almost}%
  \BibitemOpen
  \bibfield  {author} {\bibinfo {author} {\bibfnamefont {I.}~\bibnamefont
  {Nechita}}, \bibinfo {author} {\bibfnamefont {Z.}~\bibnamefont {Pucha{\l}a}},
  \bibinfo {author} {\bibfnamefont {{\L}.}~\bibnamefont {Pawela}}, \ and\
  \bibinfo {author} {\bibfnamefont {K.}~\bibnamefont {{\.Z}yczkowski}},\
  }\href@noop {} {\bibfield  {journal} {\bibinfo  {journal} {Journal of
  Mathematical Physics}\ }\textbf {\bibinfo {volume} {59}} (\bibinfo {year}
  {2018})}\BibitemShut {NoStop}%
\bibitem [{\citenamefont {Chitambar}\ and\ \citenamefont
  {Gour}(2019)}]{chitambar2019quantum}%
  \BibitemOpen
  \bibfield  {author} {\bibinfo {author} {\bibfnamefont {E.}~\bibnamefont
  {Chitambar}}\ and\ \bibinfo {author} {\bibfnamefont {G.}~\bibnamefont
  {Gour}},\ }\href@noop {} {\bibfield  {journal} {\bibinfo  {journal} {Reviews
  of Modern Physics}\ }\textbf {\bibinfo {volume} {91}},\ \bibinfo {pages}
  {025001} (\bibinfo {year} {2019})}\BibitemShut {NoStop}%
\end{thebibliography}%

\appendix
\section{Examples}\label{appendix}

\subsection{Product unitary matrices}
When the unitary matrix is a product $U = U_1 \otimes U_2$ ($U_1$ is of size 
$d_1$ and $U_2$ is of size $d_2$) we can express the minimized distance 
$z_{d_1:d_2}(U)$ as
\begin{equation}
\begin{split}
z_{d_1:d_2}(U_1 \otimes U_2) &= \min\{ |x|: x \in  W^{\otimes}_{d_1:d_2}(U_1 
\otimes U_2) 
\}\\
&= \min\{ |xy|: x \in  W(U_1), y \in  W(U_2)  \}\\
&= v(U_1) v(U_2).
\end{split}
\end{equation}
If $0 \in W(U_1)$ (or $0 \in W(U_2)$), then the best local and global 
strategies make no type II errors. Assume now, that $0 \not\in  W(U_1)$ and 
$0 
\not\in  W(U_2)$. Let $1, \ldots, \alpha$ be the eigenvalues of $U_1$ 
ordered counterclockwise by their phases and let $1, \ldots, \beta$ be the 
eigenvalues of $U_2$ 
ordered counterclockwise by their phases. As zero is not in the numerical 
ranges of both 
matrices, we may write $z_{d_1:d_2}(U_1 \otimes U_2) = 
|1+\alpha||1+\beta|/4$.
To achieve the greatest difference between local and global strategies we 
take 
$\alpha = \beta = i$. Then, $1$ and $i \times i = -1$ are the eigenvalues of 
$U_1 \otimes U_2$, so $v(U_1 \otimes U_2) = 0$, but $z_{d_1:d_2}(U_1 \otimes 
U_2) = |1+i|^2/4 = 1/2$. 

On the other, if $U_1 = \Id$, then from the property of the numerical range 
$v(\Id \otimes U_2) = v(U_2) = z_{d_1:d_2}(\Id \otimes U_2)$. The value 
$v(U_2)$ can vary arbitrarily in the range $[0,1]$.

\subsection{Diagonal unitary matrices}
For a diagonal unitary matrix $U$ let us indicate its diagonal elements as 
$D_{i,j}$ for $i=1,\ldots,d_1$ and $j=1,\ldots,d_2$. We write
$\left| \bra{\psi_1, \psi_2} U \ket{\psi_1, \psi_2} \right| = \left| 
\sum_{i,j} p_i q_j D_{i,j}
 \right|$, where $(p_i)_i$ and $(q_j)_j$ are probability vectors that arise 
 from 
diagonals of $\proj{\psi_1}$ and $\proj{\psi_2}$, respectively. If all 
eigenvalues of $U$ are contained in an arch of length $\pi$, then one can 
show that to compute $z_{d_1:d_2}(U)$ it is sufficient to 
consider quadruples of the eigenvalues of the form $D_{i,j}$, where $i=i_1, 
i_2$ and $j=j_1, j_2$. In that case we optimize $| pqD_{i_1, j_1} + 
(1-p)q D_{i_2, j_1} + 
p(1-q) D_{i_1, j_2} + (1-p)(1-q) D_{i_2, j_2}|$ over $p,q \in [0,1]$. To 
achieve the greatest difference between $z_{d_1:d_2}(U)$ and $v(U)$ we put 
$D_{1,1} = 1, D_{1,2} = D_{2,1} = i, D_{2,2} = -1$. This configuration is 
equivalent with product case $\{xy: x,y \in \{1,i\}\}$, which we have 
already solved, that is $v(U) = 0$ and $z_{d_1:d_2}(U) = 1/2$.

\subsection{Upper-bound on $z$}
Following the property \cite[Property 9]{product}, the 
barycenter 
of the spectrum of $U$ always belongs to the product numerical range, 
$\tr(U) 
/ 
(d_1d_2) \in  W^{\otimes}_{d_1:d_2}(U)$. Hence, for any $U$ we have
\begin{equation}
z_{d_1:d_2}(U) \le \frac{|\tr(U)|}{d_1d_2}.
\end{equation}

\subsection{Numerical example}

Consider the family of matrices:
\begin{equation}
T_\alpha = (U \otimes V)^{(1 - \alpha)} (\Id_d - 2\ketbra{\psi}{\psi})^\alpha,
\label{eq:family}
\end{equation}
where $\ket{\psi} = \ket{\Id}/\sqrt{d}$ is the maximally entangled state of 
dimension $d$. and $U$,
$V$ are some unitary matrices. The family $T_\alpha$ allows us to smoothly
transition from a product unitary matrix to an arbitrary one. A Monte-Carlo
example of the product numerical for some values of parameter $\alpha$ is show
in Fig.~\ref{fig:product-shadow}. Aside from showing the general shape of the
numerical range the plots also show the density of points for each case.

\begin{figure*}[!h]
\centering
\includegraphics[width=0.45\textwidth]{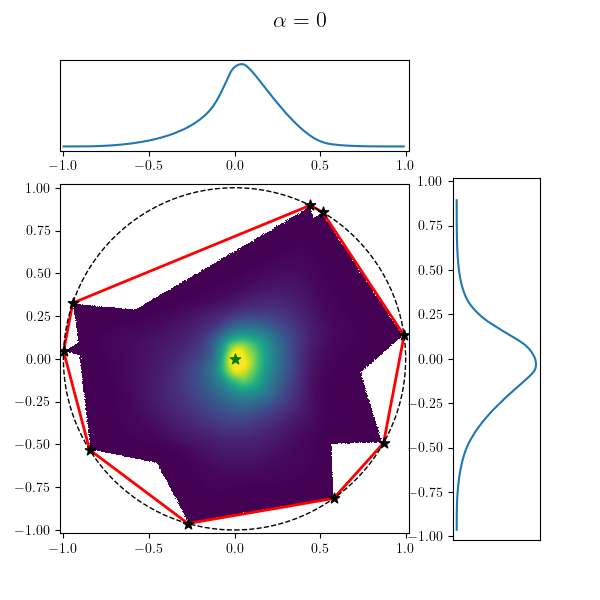}
\includegraphics[width=0.45\textwidth]{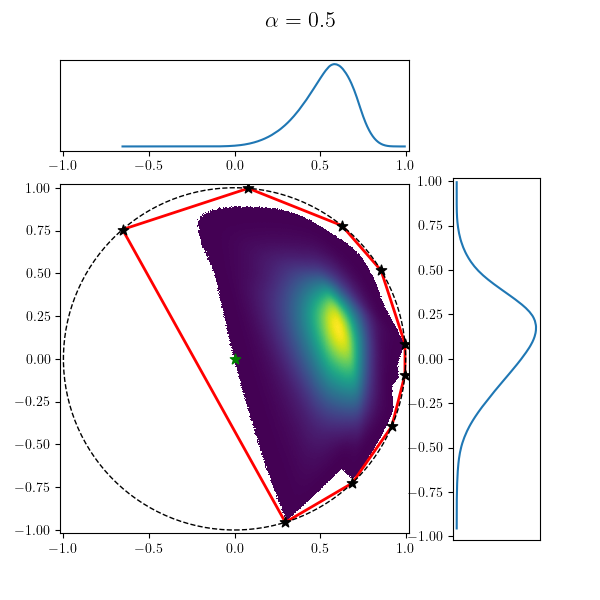}\\
\includegraphics[width=0.45\textwidth]{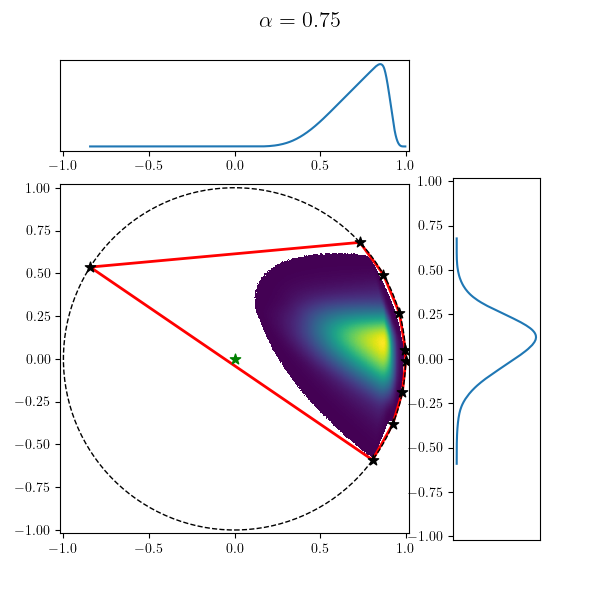}
\includegraphics[width=0.45\textwidth]{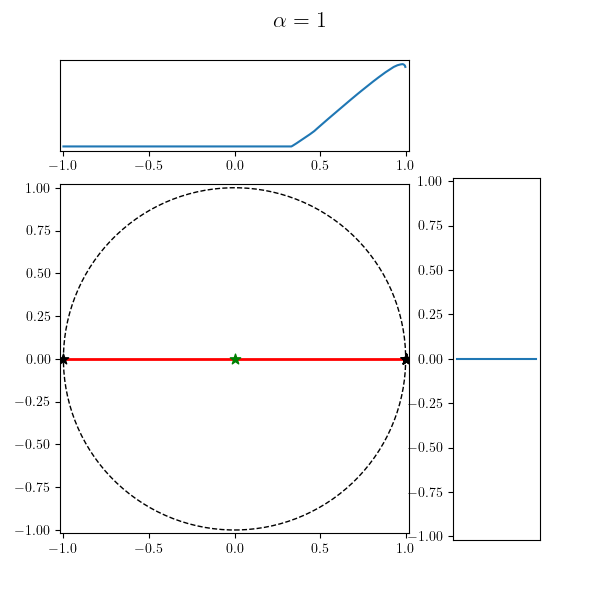}
\caption{Product numerical ranges of matrices from Eq.~\eqref{eq:family} for
$\alpha=0, 0.5, 0.75, 1$. The red polygon is the numerical range of $T_\alpha$,
the black stars are eigenvalues of $T_\alpha$. The product numerical range is
shown as an empirical histogram resulting from Monte Carlo simulation. The side
plots are marginal distributions.}\label{fig:product-shadow}
\end{figure*}
\end{document}